\newtheorem{definition}{Definition}
\newtheorem{thm}{Theorem}
\begin{document}

\title{Personal data disclosure and data breaches: the customer’s viewpoint}
\author[1]{Giuseppe D’Acquisto}
\author[2]{Maurizio Naldi}
\author[3]{Giuseppe F. Italiano}
\affil[1]{Garante per la Protezione dei Dati Personali, Piazza di Monte Citorio n. 121, Rome, Italy}
\affil[2]{Universit\`{a} di Roma Tor Vergata, Dipartimento di Informatica Sistemi Produzione, Via del Politecnico 1, 00133 Roma, Italy}
\affil[3]{Universit\`{a} di Roma Tor Vergata, Dipartimento di Informatica Sistemi Produzione, Via del Politecnico 1, 00133 Roma, Italy}
\date{\today}
\maketitle
\tableofcontents

\newpage
\section{Introduction}
\label{Intro}
In most cases, a customer can receive a service over the Internet without disclosing its personal data, aside from its IP address. However, some services may need personal information or may be accomplished faster and more easily if the customer is willing to release some personal information. In other cases the service may be enhanced if more is known about the customer. For example an online store may propose additional items for purchase if the customer's purchasing history is known, or if the customer's tastes are known otherwise, and a cloud computing provider may sell additional CPU time or storage space if the consumption behaviour of the customer is known. The effect of service enhancements as incentives for the customer to loose its privacy requirements has been known since long as a marketing tool \cite{Milne}. In all those cases the customer is induced into divulging some personal information, so that the customer and the service provider share such information. However, the release of personal information carries along negative consequences as well when it leaks outside the customer-service provider circle. Such unintentional release of secure information to an untrusted environment is called \textit{data breach}. When a data breach occurs, that information falls in the hands of a third party, which may accomplish an identity theft or exploit those data for malicious uses (e.g., fraudulent unemployment claims,  fraudulent tax returns, fraudulent loans, home equity fraud, and payment card fraud, as briefly reported in \cite{Romanosky}). Such leakages may turn into an economical loss both for the service provider that had to protect the customer's data and for the customer as well \cite{Ponemon}. A service provider has then to perform a trade-off between its investments in security and the losses it may incur because of data breaches, and may be led to invest in privacy-enhancing technologies \cite{LE2010}. On the other hand, the customer has to balance the benefits obtained by releasing its personal information against the potential losses associated to a data breach. In the following the term customer is not necessarily referred to an individual, but includes companies that act as customers, for which the amount of money at risk through data breaches is quite larger than for individuals. 

Although some effort has been spent to determine optimal investment policies for a company investing in security on its information system \cite{Cavusoglu,Gordon}, to the best of our knowledge no aids have been proposed to help the customer's decision, when the customer heavily relies on the security of its service provider's information system. However, it has been argued that the economic analysis of the trade-offs that a company faces when tracking its customers is gaining relevance, since some start-ups have started offering bargains in return for users’ data \cite{Krishna}. Such offers strengthen the awareness that customers' personal information have a value, and spur the customer themselves to carefully assess if the release of those information is worthwhile. 

Our aim is to explore the pros and cons the customer gets when releasing personal data and facing the danger of identity theft. In this paper we introduce a model to describe the advantages and disadvantages associated with the release of personal data, and formulate the customer's problem of choosing the right level of personal information release as a trade-off problem, with the aim of maximizing the customer's surplus. Rather than considering just the service provider as the party in charge of protecting the data, our models consider the wider viewpoint that data breaches may occur on the service provider's side as well as on the customer's side. We prove that the solution to the trade-off problem exists and is unique, and show that solution in a reference scenario. The results represent a first step to model, in a parsimonious way, the complex interaction between service providers and customer as to demand, security, and privacy, from an economical perspective. It appears that, among the parameters outside the control of the customer, the customer's decision is most sensitive to the price imposed for the service and to the impact of the profiling capability of the service provider on the loss incurred after data breaches. Finally, we consider the limit case of a perfectly secure service provider, so that data breaches can occur just on the customer's side. For that case we provide closed form expressions both for the optimal level of exposure for the customer, and for the sensitivity with respect to all the parameters involved in the trade-off problem (embodied by the elasticity or quasi-elasticity functions). We believe that the results may provide theoretical grounds to introduce a regulatory approach to the issue, to achieve a regulated balance of interests between the contrasting aims of preserving the customer's privacy and spurring the service provider's business.

\section{Service demand curve, data disclosure and the effects of data breaches}
\label{Demand}
When releasing personal data in exchange for enhanced services, the customer has to perform a trade-off between what it releases, the benefits it gets, and the risk related to divulging that information. The main variables involved in that trade-off are: the quantity of services that are sold; their unit price; the personal information that are released as part of the service sale; and the potential loss deriving from that information disclosure. In order to derive some guidelines for the customer releasing that information, we need to define the relationship between these quantities. In this section, we provide analytical models that link all the four quantities mentioned above. 

We recall that the service provider sells services at a unit price $p$; the customer buys a quantity $q$ of such services, represented, e.g., by minutes of phone traffic, bytes of data traffic volume, digital units, CPU time, bytes of storage capacity. The relationship between $p$ and $q$ is the \textit{demand curve} \cite{Mankiw}. For sake of simplicity, we assume here that in our case the relationship is linear. When no personal information is disclosed, those quantities are then related by the expression
\begin{equation}
\label{Dem}
\frac{q}{q^{*}}+\frac{p}{p^{*}}=1,
\end{equation} 
where $q^{*}$ is the maximum quantity of service that the service provider can provide, and $p^{*}$ is the maximum unit price that the customer can sustain (a.k.a. its \textit{willingness-to-pay}). When the service is free ($p=0$), the customer asks for the maximum quantity that the service provider can supply ($q=q^{*}$). When the price is larger than the willingness-to-pay ($p\ge p^{*}$), the customer will not buy the service, and the quantity of service sold will be $q=0$. In the following we treat both the quantity of service $q$ and the unit price $p$ as continuous variables (though their variation is actually discrete), since we assume that their granularity is extremely small with respect to the values at hand.

However, if the customer is willing to release some personal information, the service provider eases the provision of services, e.g., by providing personalized services or automatic login. In fact, the more the service provider knows about the customer, the better it (or any other third party associated to the service provider and sharing that information) can shape and direct its offer to achieve a sale. The release of personal data can therefore help reduce the product/service search costs for both parties: the time employed by customers when looking for that product/service, and the effort spent by sellers trying to reach out to their customers. Varian has shown that customers rationally want some of their personal information to be available to sellers \cite{Varian-09}. Hence, the customer is incentivized to supply its personal data and increase its consumption. We assume that the overall information that the customer can release is collected into a number $N$ of sets $\{I_{1}, I_{2}, \ldots, I_{N}\}$, so that each set includes the information belonging to the previous set, i.e., $I_{i}\subset I_{i+1}$, with $i=1,2,\ldots , N-1$.

Each release of information by the customer is rewarded by a new offer by the service provider, which at the same time incentivizes the consumption. The demand curve correspondingly changes, e.g., as illustrated in Figure~\ref{fig:Domanda}, where we can visualize the shift in the customer's consumption by observing how the working point moves onto the new demand curve. For example, we can consider in Figure~\ref{fig:Domanda} the point $(q_{1},p_{1})$ on the pre-release demand curve represented by Eq. (\ref{Dem}). After the release of personal data the customer may move to the working point $(q_{2},p_{2})$ on the after-release demand curve. 
 
\begin{figure}[tp]
\begin{center}	
  \includegraphics[width=.8\columnwidth]{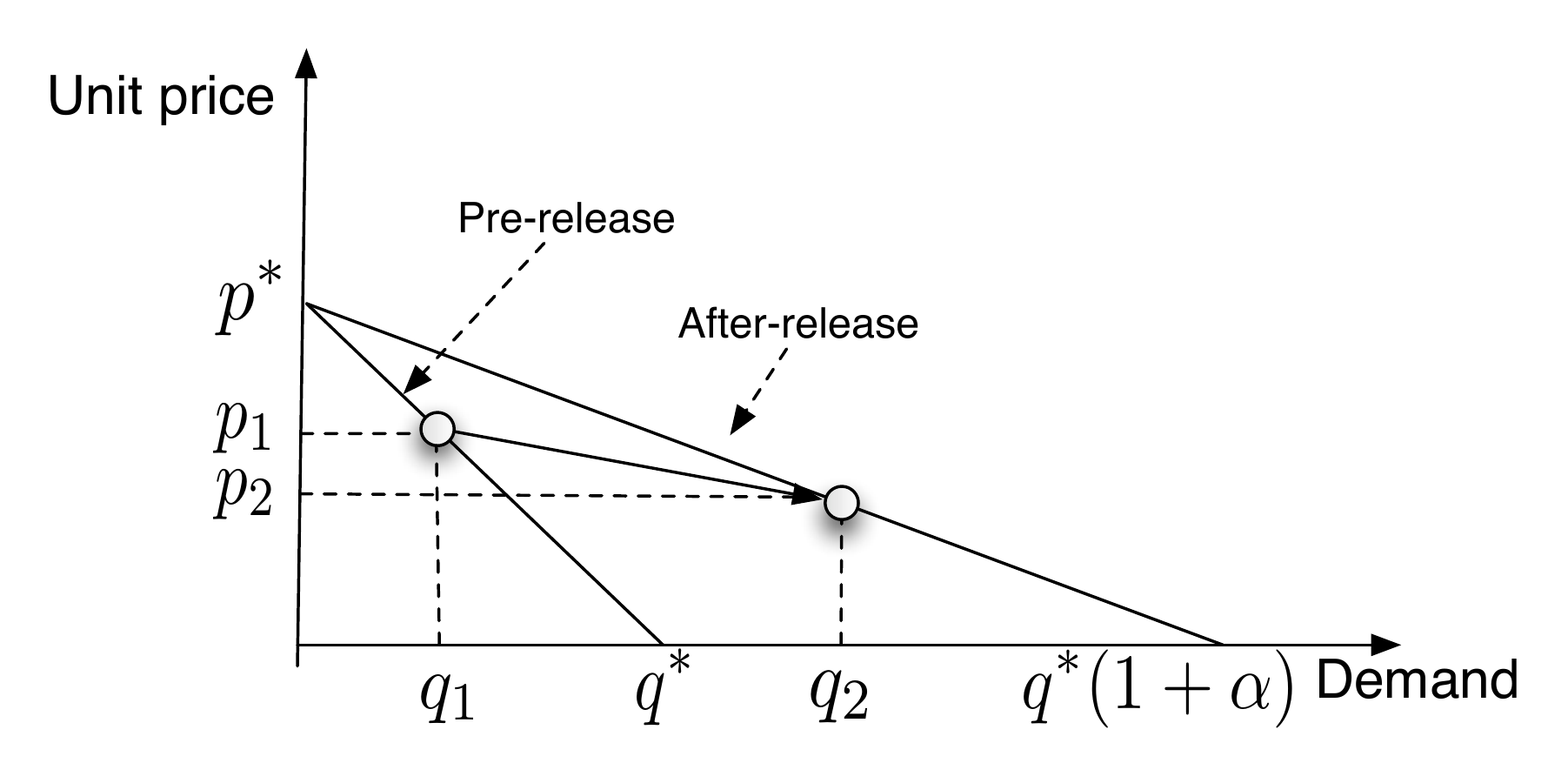}
	\caption{The demand curve before and after the release of personal data} 
	\label{fig:Domanda}
\end{center}
\end{figure}  

If we assume the willingness-to-pay to stay unchanged and the demand curve to be linear, the change in the demand curve is equivalent to a translation of the maximum amount of service, as illustrated in Figure~\ref{fig:Domanda}. When the customer releases the personal information contained in the set $I_{i}$ both the marginal demand (i.e., the increase in demand for a decreasing unit change in price) and the maximum consumption increase by the factor $(1+\alpha_{i})$, where $\alpha_{i} >0$ is the marginal demand factor. Since $I_{i}\subset I_{i+1}$ we can safely assume that $\alpha_{i} < \alpha_{i+1}$, with $i=1,2,\ldots , N-1$.  The new line passes through the points $(0,p^{*})$ and $(q^{*}(1+\alpha_{i}),0)$, so that its equation is now
\begin{equation}
\label{Dcurve}
\frac{q}{q^{*}(1+\alpha_{i})}+\frac{p}{p^{*}}=1.
\end{equation}

The movement of the customer on the new demand curve is a key point in determining its viability, and therefore the interest in releasing personal information. We have to make a distinction between customers whose consumption is time-constrained and customers who are not. 

If a user has a limited amount of time to use the service, we can assume that the extension of the allowed range of consumption is of no interest to it: time-constrained customers will consume the same quantity of service regardless of the new offer, so that $q_{2}=q_{1}$, and will not find convenient to release personal data to get a new offer.

Customers that are not time-constrained can in principle move to any position on the new demand curve. However, we expect that their moves, as well as the provider's offer, aim at increasing the respective satisfaction. In turn, that means that customers will change their consumption so as to increase their surpus, while a service provider will be willing to put forward a new offer if that leads to a revenue increase. Both assumptions pose conditions on the set of working points on the new demand curve; their intersection defines the set of valid working points.

We examine first the condition that the customer's surplus must increase. By moving to the new demand curve, the customers changes its surplus from $S_{c1}$ to $S_{c2}$:
\begin{equation}
\label{Vincuser}
S_{c1}=q_{1}\frac{p^{*}-p_{1}}{2} \longrightarrow S_{c2}=q_{2}\frac{p^{*}-p_{2}}{2}.
\end{equation} 
Since $p^{*}-p_{1}=p^{*}q_{1}/q^{*}$ and $p^{*}-p_{2}=p^{*}q_{2}/[q^{*}(1+\alpha)]$, the customers' surplus increases iff the following condition is satisfied
\begin{equation}
S_{c2}>S_{c1} \Longrightarrow q_{2}>q_{1}\sqrt{1+\alpha}.
\end{equation}

We now deal with the thrust that moves the service provider: getting more revenues. On the first demand curve, the revenues of the service provider, as given by the product of price and quantity, are
\begin{equation}
S_{p1}=p^{*}\left( 1-\frac{q_{1}}{q^{*}}\right)q_{1}.
\end{equation} 
After the release of personal information and the passage to the new demand curve, the revenues become
\begin{equation}
S_{p2}=p^{*}\left[ 1-\frac{q_{2}}{(1+\alpha)q^{*}}\right]q_{2}.
\end{equation}
The revenues increase if $S_{p2}>S_{p1}$, which in turn leads to the following quadratic inequality
\begin{equation}
q_{2}^{2}-(1+\alpha)q^{*}q_{2}+(1+\alpha)(q^{*}-q_{1})q_{1}<0.
\end{equation}
By solving it, we obtain that the passage to the new demand curve leads to increased revenues for the service provider if the new demand satisfies the inequality
\begin{equation}
\label{Vincprov}
(1+\alpha)q^{*}\frac{1-\sqrt{1-4\frac{q_{1}}{q^{*}}\frac{1-q_{1}/q^{*}}{1+\alpha}}}{2}<q_{2}<(1+\alpha)q^{*}\frac{1+\sqrt{1-4\frac{q_{1}}{q^{*}}\frac{1-q_{1}/q^{*}}{1+\alpha}}}{2}.
\end{equation}

By combining the two constraints represented by the inequalities (\ref{Vincuser}) and (\ref{Vincprov}), a necessary condition for the release of private information to be beneficial both to the customer and to the service provider is that the new demand satisfies the inequality
\begin{equation}
\label{Vinctot}
q_{1}\sqrt{1+\alpha}<q_{2}<(1+\alpha)q^{*}\frac{1+\sqrt{1-4\frac{q_{1}}{q^{*}}\frac{1-q_{1}/q^{*}}{1+\alpha}}}{2}.
\end{equation} 

We notice that, if the service provider maintains its unit price, the customer, acting as a price taker, changes its demand to
\begin{equation}
q_{2}=q_{1}(1+\alpha),
\end{equation}
which is surely within the acceptable region defined by the inequality (\ref{Vinctot}). In fact, the constraint on the customer's surplus is satisfied, since $q_{1}(1+\alpha)>q_{1}\sqrt{1+\alpha}$. And the provider's revenues increase, since the unit price has been kept constant and the demand has increased.

The width of the valid region represented by the inequality (\ref{Vinctot}) shows that the provider may increase its revenues even if the unit price is lowered.
 
But the disclosure of personal information comes with a cost: its potential malicious use by third parties. That misuse may turn into an economic loss for the customer. This may be both a direct loss deriving from the identity theft and an indirect one due, e.g., to the expenses incurred to recover deleted data, the loss of reputation, legal expenses, or the inability to use networking services. Each release of information $I_{i}$ is then associated to a potential money loss $l_{i}$, with $i=1,2,\ldots , N$. In the following we assume that all quantities (e.g., the losses and the quantity of service) are referred to a unit period of time, be it a month or a year. Again we can safely assume that the potential loss grows with the size of the information set, i.e., $l_{i}< l_{i+1}$. Since we have at the same time a positive effect (increase of services) and a negative one (potential money loss) linked to the same root cause (the disclosure of personal information), we can draw a relationship between the two effects through the common cause. In this paper we assume that relationship to be expressed through a power law:
\begin{equation}
\label{Info}
\frac{\alpha_{i}}{\alpha_{N}}=\left( \frac{l_{i}}{l_{N}}\right)^{\nu}.
\end{equation}
In addition to its well known property of scale invariance and its appearance in a number of contexts (see, e.g., \cite{Newman}\cite{Roberts}), the choice of a power law allows us to describe a variety of behaviours by acting on the single parameter $\nu$, which we call the privacy parameter. 
If $\nu <1$, the customer releases its information starting with the most potentially damaging, and the additional risk associated to further releases is a decreasing function of the information released. In particular, if $\nu \ll 1$ (i.e., the service provider is privacy-friendly), the customer gains a large benefit (i.e., a large extension of the maximum quantity of services) even for small pieces of the information released (i.e., small potential losses). When $\nu =1$, we have instead a linear relationship between the information released and the associated economical loss. The case $\nu > 1$ models instead the situation where the customer releases information starting with the least sensitive one. Here we don't support strongly any specific value forthe privacy parameter. However, we could set its value by exploiting a single instance of eq. (\ref{Info}), i.e., considering the fraction of the maximum potential loss $l_{i}/l_{N}$ corresponding to a given fraction of the benefit obtained $\alpha_{i}/\alpha_{N}$. For example, we might apply the Pareto principle, which states that roughly 80\% of the effects come from 20\% of the causes. That 80/20 relationship has been observed in many cases in the context of information security (see, e.g., \cite{Chen} \cite{Chung}). If we adopt that principle that the 80\% of the benefit is obtained with the 20\% of the effort, and assume $\alpha_{i}/\alpha_{N}=0.8$ and $l_{i}/l_{N}=0.2$ in Eq. (\ref{Info}), we obtain $\nu \simeq 0.138647$. Empirical data that shed some light on the nature of the information release law embodied by Equation (\ref{Info}) are provided in \cite{Riederer2011}: a transactional privacy mechanism has been proposed, whereby customers choose to release access to their personal information (namely, their browsing behaviour) in exchange for money. In the transactional privacy mechanism, customers are led to release their personal information starting with the least sensitive ones, since this strategy optimizes their revenues through that privacy selling mechanism.

\section{Data vulnerability}
\label{Pdb}
In Section \ref{Demand} we have shown that the extension of service provisioning (i.e., the increase in the maximum consumption by the factor $1+\alpha_{i}$) is related to the potential loss $l_{i}$ that the customer incurs when it releases the personal information $I_{i}$ and a data breach occurs. Such loss is uncertain; we need to associate that value with the probability that a data breach occurs. In this section we provide a model for such occurrence.

We consider that a data breach may take place because of deficiencies on either of the two sides of the customer-service provider relationship. The data theft may be due either to an attack on the service provider's information system or to the customer's data repository (e.g., its computer). We assume that the failures on the two sides are independent of each other, and that a data breach takes place as either of the two sides fail. Under these hypotheses, a suitable model for the overall data breach phenomenon is the classical series combination of two systems that we can borrow from the reliability field (see Ch. 3.2 in \cite{Gnedenko}). The data breach probability $\pi$ is then related to the individual data breach probabilities of the two sides $\pi_{\textrm{s}}$ (service provider) and $\pi_{\textrm{c}}$ (customer) by the formula
\begin{equation}
\label{Pdb1}
\pi=\pi_{\textrm{s}} + \pi_{\textrm{c}} -  \pi_{\textrm{s}}\cdot \pi_{\textrm{c}}.
\end{equation}

As to the vulnerability on the customer's side we consider that the probability of data breach is a growing function of the amount of personal information that the customer has divulged, since that determines its level of exposure to security threats. We assume a simple power law function to hold, and, by exploiting the relationship between information released and potential loss extablished in Section \ref{Demand}, we obtain the following function:
\begin{equation}
\label{Pdb2}
\pi_{\textrm{c}}=\pi_{\textrm{c}}^{*}\left( \frac{l_{i}}{l_{N}} \right)^{\theta},
\end{equation}
where $\pi_{\textrm{c}}^{*}$ is the probability of breach corresponding to the maximum release of information. The parameter $\theta \in (0,1)$ describes the balance between the probability of breach and the quantity of personal information released (for which the economical loss represents a proxy): if $\theta \ll 1$ (reckless customer) the probability of data breach is close to its maximum even for the smallest pieces of released information; if $\theta \simeq 1$ (privacy-aware customer) the customer has to release a substantial amount of information before it suffers a significant probability of data breach. We call $\theta$ the security parameter. We can set a value for it, again by resorting to the Pareto principle invoked for the privacy parameter.

For data breaches on the service provider's side a model has been proposed by Gordon and Loeb \cite{Gordon} to describe the dependence of the data breach probability on the investments in security carried out by the service provider. However, in this context we prefer not to explicitly account for the dependence of the data breach probability on that, or other determinants. In fact, the inclusion of specific variables affecting the data breach probability would leave the customer with the problem of determining their value to derive its optimal information disclosure strategy: in the Gordon-Loeb model the customer would have to guess the amount of security investment. Instead, we consider the system-related data breach probability $\pi_{\textrm{s}}$ as a parameter. We assume that the customer may estimate $\pi_{\textrm{s}}$ through a number of sources, and certainly more easily than the value of investments or any other information private to the service provider. 

\section{Maximization of customer's surplus}
\label{Optim}
In Sections \ref{Demand} and \ref{Pdb} we have provided models respectively for the price paid by the customer and for the data vulnerability, and we have also linked those quantities to the potential loss suffered by the customer through the disclosure of its personal information. In this section we use those relationships to determine the net surplus for the customer and the best information disclosure strategy.

The net surplus for the customer is given by the algebraic sum of two terms. The positive component is the surplus obtained by paying a price for the service lower than its willingness-to-pay. This surplus grows with the quantity of service that the customer receives, which in turn grows with the personal information that the customer reveals. Hence, the positive term grows with the amount of disclosed information (and with the related potential loss). On the other hand, the customer suffers a potential loss due to the same information disclosure; this is the negative term that again grows with the amount of disclosed information. We therefore expect to find a trade-off value for that amount of disclosed information that maximizes the net surplus. That represents the strategy the customer has to follow when revealing its private information to the service provider. In that strategy the customer acts as a price taker: the price is set by the service provider and the customers responds with a consumption dictated by the after-release demand curve. The latter depends however on the quantity of personal information released by the customer: different degrees of information correspond to different slopes of the demand curve shown in Figure~\ref{fig:Domanda}. The quantity of personal information released is then the leverage employed by the customer to maximize its surplus. Hereafter we provide the expression of the net surplus. 

It is to be noted that we do not assume any specific relationship between the information released by the customer and the potential loss for it; we just assume that there is one, and that the customer knows it (we expect that relationship to be different for each customer). In the following, the surplus is maximized by considering the loss as a leverage, since it actually represents a proxy for the information released. If the customer is able to identify the value of the loss that maximizes its surplus, at the same time it can identify the associated amount of information that maximizes the customer's surplus.

When the customer reveals the set of information $I_{i}$, its net surplus is given by the difference of the two terms mentioned above. If we indicate the price associated to the generic quantity $y$ as $p(y)$, we have
\begin{equation}
\label{Nsur}
S_{\textrm{c}}=\int_{0}^{q}\left( p(y)-p\right)dy-\pi [l_{i}],
\end{equation} 
where the first term embodies the surplus deriving from paying a price lower than the willingness-to-pay, integrated over the quantity of service received by the customer. 
We can now recall that: a) the unit price $p$ and the quantity of service $q$ are related through the demand curve (\ref{Dcurve}), namely $q/[q^{*}(1+\alpha_{i})]=1-p/p^{*}$; b) the maximum amount of service is related to the potential loss through the power law (\ref{Info}); c) the data breach probability is represented by expr. (\ref{Pdb1}) and (\ref{Pdb2}). With that additional information the net surplus can be expressed in the following form
\begin{equation}
\label{Nsur2}
\begin{split}
S_{\textrm{c}}&=\int_{0}^{q}\left\{ p^{*}\left[ 1-\frac{q}{q^{*}(1+\alpha_{i})}\right]-p\right\}-\pi l_{i}\\
&=(p^{*}-p)q-\frac{p^{*}}{q^{*}(1+\alpha_{i})}\frac{q^{2}}{2}-\pi l_{i}\\
&=p^{*}\left( 1-\frac{p}{p^{*}}\right)q-\frac{1}{2}p^{*}\left( 1-\frac{p}{p^{*}}\right)q-\pi l_{i}\\
&=\frac{1}{2}p^{*}\left( 1-\frac{p}{p^{*}}\right)q-\pi l_{i}\\
&=\frac{1}{2}p^{*}q^{*}(1+\alpha_{i})\left( 1-\frac{p}{p^{*}}\right)^{2}-\pi l_{i}\\
&=\frac{p^{*}q^{*}}{2}\left[ 1+\alpha_{N}\left( \frac{l_{i}}{l_{N}}\right)^{\nu}\right]\left( 1-\frac{p}{p^{*}}\right)^{2}-\left[ \pi_{\textrm{s}}+\pi_{\textrm{c}}^{*}(1-\pi_{\textrm{s}})\left(\frac{l_{i}}{l_{N}}\right)^{\theta}\right]l_{i}.
\end{split}
\end{equation}

If we stick to the discrete framework we have adopted in Section \ref{Demand}, the optimal amount of information to be disclosed by the customer is the set $I_{i}$ associated to the loss
\begin{equation}
\hat{l}=\underset{l_{i}}{\mathrm{argmax}} \quad S_{\textrm{c}}.
\end{equation}
However, we can gain a substantial insight into the trade-off problem, with a little loss in accuracy, if we now treat the potential loss as a continuous quantity, i.e., by replacing $l$ for $l_{i}$ in expr. (\ref{Nsur2}). The optimal loss can now be derived as
\begin{equation}
\hat{l}=l \quad : \quad \frac{\partial S_{\textrm{c}}}{\partial l}=0.
\end{equation}
If we derive expr. (\ref{Nsur2}), we get
\begin{equation}
\frac{\partial S_{\textrm{c}}}{\partial l}=\frac{q^{*}p^{*}\nu}{2}\frac{\alpha_{N}}{l_{N}}\left( 1-\frac{p}{p^{*}}\right)^{2}\left( \frac{l}{l_{N}}\right)^{\nu -1}-\pi_{\textrm{s}}-\pi_{\textrm{c}}^{*}(1-\pi_{\textrm{s}})(\theta+1)\left( \frac{l}{l_{N}}\right)^{\theta}.
\end{equation}
By equating that derivative to zero we get the optimal loss as the solution of the equation
\begin{equation}
\label{Solu}
\frac{q^{*}p^{*}\nu}{2}\frac{\alpha_{N}}{l_{N}}\left( 1-\frac{p}{p^{*}}\right)^{2}\left( \frac{l}{l_{N}}\right)^{\nu -1}-\pi_{\textrm{s}}+\pi_{\textrm{c}}^{*}(1-\pi_{\textrm{s}})(\theta+1)\left( \frac{l}{l_{N}}\right)^{\theta}=0,
\end{equation}
which we can rewrite in the synthetic form
\begin{equation}
\label{Eqopt}
Al^{\nu-1}-\pi_{\textrm{s}}-Bl^{\theta}=0,
\end{equation}
where $A$ and $B$ are the following positive quantities
\begin{equation}
\label{defAB}
\begin{split}
A&=\frac{q^{*}p^{*}\nu}{2}\frac{\alpha_{N}}{l_{N}^{\nu}}\left( 1-\frac{p}{p^{*}}\right)^{2}\\
B&=(1-\pi_{\textrm{s}})\pi_{\textrm{c}}^{*}\frac{\theta+1}{l_{N}^{\theta}}.
\end{split}
\end{equation}
We note that the derivative $\partial S_{\textrm{c}} / \partial l$ is made of three terms, two of which are powers of $l$, while the third one is a constant. That derivative is then a continuous function of $l$ over the interval $(0,\infty)$.

Solving the decision equation (\ref{Eqopt}) provides us with the optimal amount of information (for which the economical loss is proxy) that the customer can release. However, since we are interested in solutions that are at the same time positive and within the interval $[0,l_{N}]$, identified as the range of potential money losses embodied in expr. (\ref{Info}), we introduce the following definitions:
\begin{definition}[Legal solution]
A legal solution of the trade-off problem is any value $\hat{l}\geq 0$ for which the customer's surplus reaches a maximum over the interval $(0,\infty)$.
\end{definition}

\begin{definition}[Feasible solution]
A feasible solution of the trade-off problem is any value $l^{*} \in [0,l_{N}]$ such that $S_{\textrm{c}}(l^{*})\ge S_{\textrm{c}}(l)$,  $\forall l\in [0,l_{N}]$.
\end{definition}

Then, we are actually looking for a feasible solution of the trade-off problem. We can state the following theorem.
\begin{thm}[Trade-off solution]
If $\nu<1$ a feasible solution of the trade-off problem exists and is unique.\\
If $\nu>1$ a feasible solution of the trade-off problem exists and is unique if the following condition holds\\
\[
l_{N}<\alpha_{N}\frac{q^{*}p^{*}\nu}{2}\left( 1-\frac{p}{p^{*}}\right)^{2}\frac{1}{\pi_{\textrm{s}}+(1-\pi_{\textrm{s}})\pi_{\textrm{c}}^{*}(1+\theta)}
\]\\
If $\nu=1$ a feasible solution of the trade-off problem exists and is unique iff the following condition holds\\
\[
\frac{q^{*}p^{*}}{2}\left( 1-\frac{p}{p^{*}}\right)^{2}\frac{\alpha_{N}}{\pi_{\textrm{s}}+(1-\pi_{\textrm{s}})\pi_{\textrm{c}}^{*}(1+\theta)}<l_{N}<\frac{q^{*}p^{*}}{2}\left( 1-\frac{p}{p^{*}}\right)^{2}\frac{\alpha_{N}}{\pi_{\textrm{s}}}
\]
\end{thm}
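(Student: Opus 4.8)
\emph{Proof strategy.} The plan is to reduce the problem to a one–variable study of the derivative $g(l):=\partial S_{\textrm{c}}/\partial l=Al^{\nu-1}-\pi_{\textrm{s}}-Bl^{\theta}$ on $(0,\infty)$, with $A,B>0$ and $\theta\in(0,1)$ as in (\ref{defAB}), since a feasible solution is by definition a maximiser of the continuous function $S_{\textrm{c}}$ on the compact interval $[0,l_N]$. Two preliminary computations organise the whole argument. First, I would evaluate $g$ at the endpoints: using $Al_N^{\nu-1}=\alpha_N\frac{q^{*}p^{*}\nu}{2l_N}\left(1-\frac{p}{p^{*}}\right)^{2}$ and $Bl_N^{\theta}=(1-\pi_{\textrm{s}})\pi_{\textrm{c}}^{*}(1+\theta)$, a direct substitution shows that each inequality in the statement is exactly a sign condition on $g$, namely $g(l_N)>0$ for $\nu>1$, and $g(0^{+})>0$ together with $g(l_N)<0$ for $\nu=1$. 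Second, I would compute
\[
\frac{\partial^{2}S_{\textrm{c}}}{\partial l^{2}}=A(\nu-1)l^{\nu-2}-B\theta l^{\theta-1}=l^{\theta-1}\left[A(\nu-1)l^{\nu-1-\theta}-B\theta\right],
\]
whose sign controls the monotonicity of $g$ and hence the concavity of $S_{\textrm{c}}$.

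The cases $\nu\le 1$ I would settle by strict concavity. For $\nu<1$ both terms of $\partial^{2}S_{\textrm{c}}/\partial l^{2}$ are negative, and for $\nu=1$ the first term vanishes while the second remains negative; in either case $S_{\textrm{c}}$ is strictly concave on $(0,\infty)$, so it admits a unique maximiser on $[0,l_N]$, which already yields existence and uniqueness of the feasible solution. For $\nu<1$ one moreover has $g(0^{+})=+\infty$, so $g$ is strictly decreasing from $+\infty$ and no extra hypothesis is needed, as claimed. For $\nu=1$, $g$ decreases from the finite value $A-\pi_{\textrm{s}}$, so the unique maximiser is an interior point of $[0,l_N]$ — i.e.\ a legal solution that is also feasible, obtained by solving (\ref{Eqopt}) — precisely when $g(0^{+})>0$ and $g(l_N)<0$, which by the first computation are exactly the upper and lower bounds on $l_N$ in the statement.

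The case $\nu>1$ is where the real work lies, and I expect it to be the main obstacle. Now $g$ is no longer monotone: the bracket $A(\nu-1)l^{\nu-1-\theta}-B\theta$ is monotone in $l$, so $\partial^{2}S_{\textrm{c}}/\partial l^{2}$ changes sign at most once, $g$ has at most one stationary point, and in particular $g$ meets the level $0$ at most twice. Since $g(0^{+})=-\pi_{\textrm{s}}<0$ while the hypothesis gives $g(l_N)>0$, the intermediate value theorem together with the ``at most twice'' bound forces $g$ to have exactly one zero in $(0,l_N)$. The delicate point — and the crux of the proof — is that this interior zero is a crossing from negative to positive, hence a local \emph{minimum} of $S_{\textrm{c}}$; consequently the maximiser of $S_{\textrm{c}}$ over $[0,l_N]$ cannot be that interior point and must be attained at an endpoint, so one is left to compare the two candidate values $S_{\textrm{c}}(0)$ and $S_{\textrm{c}}(l_N)$ and show that under $g(l_N)>0$ exactly one of them dominates, singling out the unique feasible solution. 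This endpoint comparison, which no appeal to global concavity prepares us for, is the step I expect to require the most care, since it is where the condition on $l_N$ must be used in its full strength rather than merely to locate a stationary point.
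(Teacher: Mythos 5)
Your handling of $\nu<1$ and $\nu=1$ is correct and follows essentially the paper's own route: the paper likewise deduces monotonicity of $g(l)=\partial S_{\textrm{c}}/\partial l$ from the sign of the second derivative, combines the limits of $g$ at $0^{+}$ and $+\infty$ with Bolzano's theorem for $\nu<1$, and for $\nu=1$ solves $A-\pi_{\textrm{s}}-Bl^{\theta}=0$ explicitly, which is equivalent to your two sign conditions. Your unification of the paper's subcases $1<\nu<1+\theta$ and $\nu>1+\theta$ through ``$g$ has at most one stationary point, hence at most two zeros'' is also sound, and your observation that under $g(l_{N})>0$ the unique zero of $g$ in $(0,l_{N})$ is an up-crossing, i.e.\ a local \emph{minimum} of $S_{\textrm{c}}$, is in fact more accurate than the paper's own account, which locates the down-crossing $\hat{l}_{r}$ beyond $l_{N}$ and implicitly takes $l^{*}=l_{N}$ without comparing it to anything.

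The gap is exactly the step you defer: the comparison of $S_{\textrm{c}}(0)$ with $S_{\textrm{c}}(l_{N})$ is never performed, and it does not go the way you expect under the stated hypothesis alone. Writing $K=\frac{q^{*}p^{*}}{2}\alpha_{N}\left(1-\frac{p}{p^{*}}\right)^{2}$, $c_{1}=\pi_{\textrm{s}}+(1-\pi_{\textrm{s}})\pi_{\textrm{c}}^{*}$ and $c_{2}=\pi_{\textrm{s}}+(1-\pi_{\textrm{s}})\pi_{\textrm{c}}^{*}(1+\theta)$, integration of $g$ over $[0,l_{N}]$ gives
\begin{equation*}
S_{\textrm{c}}(l_{N})-S_{\textrm{c}}(0)=\int_{0}^{l_{N}}g(l)\,dl=K-c_{1}l_{N},
\end{equation*}
whereas the theorem's condition $g(l_{N})>0$ only says $c_{2}l_{N}<\nu K$. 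Since $c_{2}\leq(1+\theta)c_{1}$, for any $\nu\geq 1+\theta$ the interval $\bigl(K/c_{1},\,\nu K/c_{2}\bigr)$ is nonempty, and any $l_{N}$ chosen there satisfies the hypothesis yet yields $S_{\textrm{c}}(0)>S_{\textrm{c}}(l_{N})$, so the maximiser over $[0,l_{N}]$ degenerates to $l^{*}=0$ rather than $l^{*}=l_{N}$. Your plan therefore cannot be completed as written: you must either strengthen the hypothesis to $c_{1}l_{N}\leq K$, which does force $S_{\textrm{c}}(l_{N})\geq S_{\textrm{c}}(0)$ and singles out $l^{*}=l_{N}$, or adopt the paper's implicit convention that whichever endpoint maximises $S_{\textrm{c}}$ counts as the (unique) feasible solution, in which case existence and uniqueness follow already from continuity on the compact interval and the decrease-then-increase shape of $S_{\textrm{c}}$, with the stated condition playing no essential role. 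Either way, the ``crux'' you postpone is a real obstruction, not a routine verification.
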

\begin{proof} 
We consider the decision equation (\ref{Eqopt}), whose solution provides us with an extremal point of the customer's surplus. We deal separately with the three cases, where $\nu<1$, $\nu>1$, or $\nu =1$.\\
When $\nu <1$ it is convenient to rewrite the derivative of the customer's surplus as follows:
\begin{equation}
\frac{\partial S_{\textrm{c}}}{\partial l}=\frac{A}{l^{1-\nu}}-Bl^{\theta}-\pi_{\textrm{s}}.
\end{equation}
The behaviour of such derivative at either end of the interval over which we look for a legal solution is easily obtained as follows:
\begin{equation}
\begin{split}
\lim_{l \to 0^{+}} \frac{\partial S_{\textrm{c}}}{\partial l} &= +\infty,\\
\lim_{l \to +\infty} \frac{\partial S_{\textrm{c}}}{\partial l} &= -\infty .
\end{split}
\end{equation}
In addition the second derivative is
\begin{equation}
\label{secder}
\frac{\partial^{2} S_{\textrm{c}}}{\partial l^{2}}=A(\nu -1)l^{\nu-2}-B\theta l^{\theta -1}<0, \quad l\in (0,+\infty),
\end{equation}
so that the first derivative $\partial S_{\textrm{c}}/\partial l$ is a monotonic decreasing function.

We can find a closed interval $[l_{\textrm{l}},l_{\textrm{u}}]$ such that $\partial S_{\textrm{c}}/\partial l >0$ in $l=l_{\textrm{l}}$ and $\partial S_{\textrm{c}}/\partial l <0$ in $l=l_{\textrm{u}}$. We can prove this statement by construction. The upper bound $l_{\textrm{u}}$ can be found by solving the expression
\begin{equation}
\frac{A}{l_{\textrm{u}}^{1-\nu}}=Bl_{u}^{\theta} \longrightarrow l_{\textrm{u}}=\left( \frac{A}{B}\right)^{\frac{1}{\theta +1-\nu}},
\end{equation}
and noting that 
\begin{equation}
\frac{\partial S_{\textrm{c}}}{\partial l}_{\vert {l=l_{\textrm{u}}}}=\frac{A}{l_{\textrm{u}}^{1-\nu}}-Bl_{\textrm{u}}^{\theta}-\pi_{\textrm{s}}<\frac{A}{l_{\textrm{u}}^{1-\nu}}-Bl_{\textrm{u}}^{\theta}=0.
\end{equation}
As to the lower bound, we can set $l_{\textrm{l}}$ so that $Al_{\textrm{l}}^{\nu -1}=Al_{\textrm{u}}^{\nu -1}+\pi_{\textrm{s}}$. In $l=l_{\textrm{l}}$ the derivative of the customer's surplus is then
\begin{equation}
\frac{\partial S_{\textrm{c}}}{\partial l}_{\vert {l=l_{\textrm{l}}}} = Al_{\textrm{l}}^{\nu -1}-Bl_{\textrm{l}}^{\theta}-\pi_{\textrm{s}}=Al_{\textrm{u}}^{\nu -1}+\pi_{\textrm{s}}-Bl_{\textrm{l}}^{\theta}-\pi_{\textrm{s}}=Bl_{\textrm{u}}^{\theta}-Bl_{\textrm{l}}^{\theta}>0.
\end{equation} 
By Bolzano's theorem we can then conclude that there exists a value $\hat{l}\in (l_{\textrm{l}},l_{\textrm{u}}) : \frac{\partial S_{\textrm{c}}}{\partial l} \vert_{l=\hat{l}}=0$. Such value is an extremal point for the customer's surplus. Since the second derivative is strictly positive over the same interval, the customer's surplus reaches its maximum in $l=\hat{l}$. In addition, since the first derivative is a monotonic function, its inverse is unique; hence there is a single value $\hat{l}$.

We have so far a legal solution of the trade-off problem. In order to find a feasible solution we note that if $\hat{l}\in (0,l_{N})$, then the legal solution is also feasible, i.e., $l^{*}=\hat{l}$. Instead, if $\hat{l}\ge l_{N}$, then the feasible solution is $l^{*}=l_{N}$, since in that case $\partial S_{\textrm{c}}/\partial l >0$ when $l<l_{N}<l^{*}$ and the customer's surplus is a growing function over the interval $(0,l_{N}]$.

We consider now the case $\nu>1$ and further subdivide into the two subcases $1<\nu < 1+\theta$ and $\nu > 1+\theta$, which we label subcases a) and b) respectively.\\
In both subcases we have
\begin{equation}
\lim_{l \to 0^{+}} \frac{\partial S_{\textrm{c}}}{\partial l}=-\pi_{\textrm{s}}<0.
\end{equation}
But the behaviour of the partial derivative at the other end of the interval is different in the two subcases
\begin{equation}
\lim_{l \to +\infty} \frac{\partial S_{\textrm{c}}}{\partial l} = \left\{ \begin{array}{ll}
-\infty & \textrm{if $1<\nu < 1+\theta$}\\ +\infty & \textrm{if $\nu > 1+\theta$}
\end{array} \right.
\end{equation}

In subcase a) we have then a first derivative that takes the same sign at both ends of the interval $(0,\infty)$. We have a legal solution if that derivative takes the zero value within that interval. In order to understand what happens,we must look at the second derivative (\ref{secder}), which happens to get mixed signs. Namely, when $1<\nu < 1+\theta$, we have
\begin{equation}
\label{secdera}
\frac{\partial^{2} S_{\textrm{c}}}{\partial l^{2}} \left\{ \begin{array}{ll}
>0 & \textrm{if $l<\left[ \frac{A(\nu -1)}{B\theta}\right]^{1/(1+\theta - \nu)}$}\\ <0 & \textrm{if $l>\left[ \frac{A(\nu -1)}{B\theta}\right]^{1/(1+\theta -\nu)}$}
\end{array} \right.
\end{equation}
This means that the first derivative starts at $-\pi_{\textrm{s}}$, first increases, reaches a peak, and then falls without end. The peak is reached when $l=\left[ \frac{A(\nu -1)}{B\theta}\right]^{1/(1+\theta - \nu)}$. We can have the two different trends shown in Figure~\ref{fig:casea}, depending on the sign of the first derivative at its peak. If the peak is negative, we have no legal solution (and \textit{a fortiori} no feasible solution); if the peak is positive, we have two legal solutions, but 0, 1, or 2 feasible solutions. 
\begin{figure}[tp]
\begin{center}	
  \includegraphics[width=.7\columnwidth]{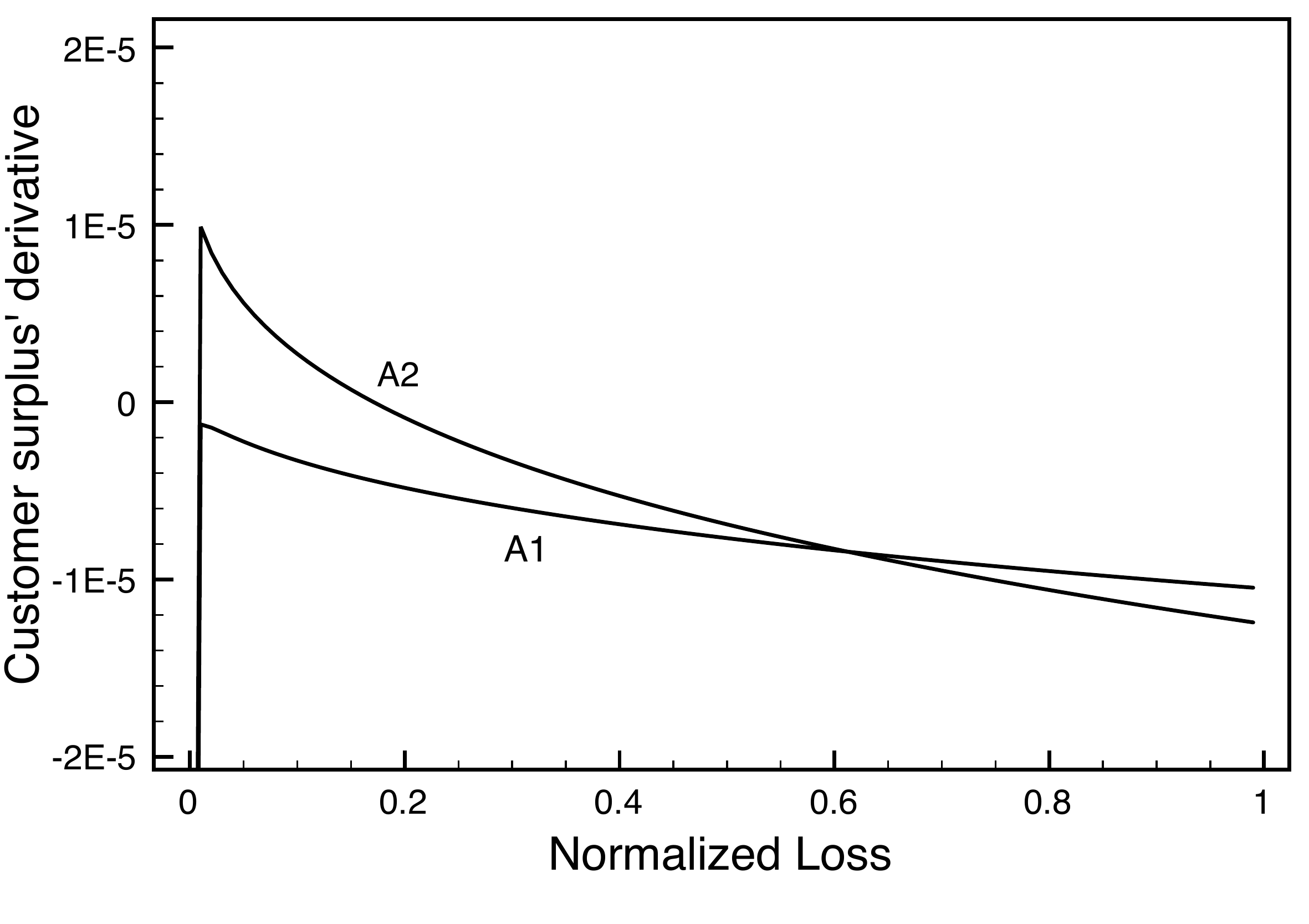}
	\caption{Alternative trends of the first derivative in subcase (a)}
	\label{fig:casea}
\end{center}
\end{figure}
In fact, even if the peak is positive, the number of feasible solutions depends on the position of the zeros. Since those zeros zeros represent two extremal points for the customer's surplus, we indicate them by the symbols $\hat{l}_{l}$ and $\hat{l}_{r}$ respectively, with $\hat{l}_{l}<\hat{l}_{r}$. We summarize the possible situations in Table \ref{table:nfeas}.
\begin{table}
\begin{tabular}{lc}
Condition & No. of feasible solutions\\
\hline
$\hat{l}_{l}>l_{N}$ & 0\\
$\hat{l}_{l}<l_{N}$ AND $\hat{l}_{r}>l_{N}$ & 1\\
$\hat{l}_{r}<l_{N}$ &  2\\
\end{tabular}
\label{table:nfeas}
\caption{Number of feasible solutions in subcase (a)}
\end{table}
We have then a single feasible solution just if the first derivative crosses the \textit{x}-axis downward at $l>l_{N}$. A sufficient condition for that to happen is that the the first derivative is still positive at $l_{N}$:
\begin{equation}
\frac{\partial S_{\textrm{c}}}{\partial l}_{\vert {l=l_{\textrm{N}}}}=\frac{\alpha_{N}}{l_{N}}\frac{q^{*}p^{*}\nu}{2}\left( 1-\frac{p}{p^{*}}\right)^{2}-\pi_{\textrm{s}}-(1-\pi_{\textrm{s}})\pi_{\textrm{c}}^{*}(1+\theta)>0,
\end{equation}
which can be reformulated as a condition on the maximum potential loss
\begin{equation}
\label{condsuff}
l_{N}<\frac{\alpha_{N}}{2}\frac{q^{*}p^{*}\nu}{\pi_{\textrm{s}}-(1-\pi_{\textrm{s}})\pi_{\textrm{c}}^{*}(1+\theta)}\left( 1-\frac{p}{p^{*}}\right)^{2}.
\end{equation}

In subcase (b), where $\nu > 1+\theta$, the first derivative takes opposite signs at either end of the $(0,+\infty)$. Since it is a continuous function, we are sure that it crosses the $x$-axis for some value of $l$. But it is not monotone, so that there might be more than one legal solution. This calls for a look at the second derivative (\ref{secder}), and we find that the situation is reversed with respect to Equation (\ref{secdera}) of case (a), namely
\begin{equation}
\label{secderb}
\frac{\partial^{2} S_{\textrm{c}}}{\partial l^{2}} \left\{ \begin{array}{ll}
<0 & \textrm{if $l<\left[ \frac{B\theta}{A(\nu -1)}\right]^{1/(\nu -1-\theta)}$}\\ >0 & \textrm{if $l>\left[ \frac{B\theta}{A(\nu -1)}\right]^{1/(\nu -1-\theta)}$}
\end{array} \right.
\end{equation}
According to (\ref{secderb}), the first derivative starts at $-\pi_{\textrm{s}}$, first decreases, and then grows without interruption, as depicted in Figure~\ref{fig:caseb}.
\begin{figure}[tp]
\begin{center}	
  \includegraphics[width=.7\columnwidth]{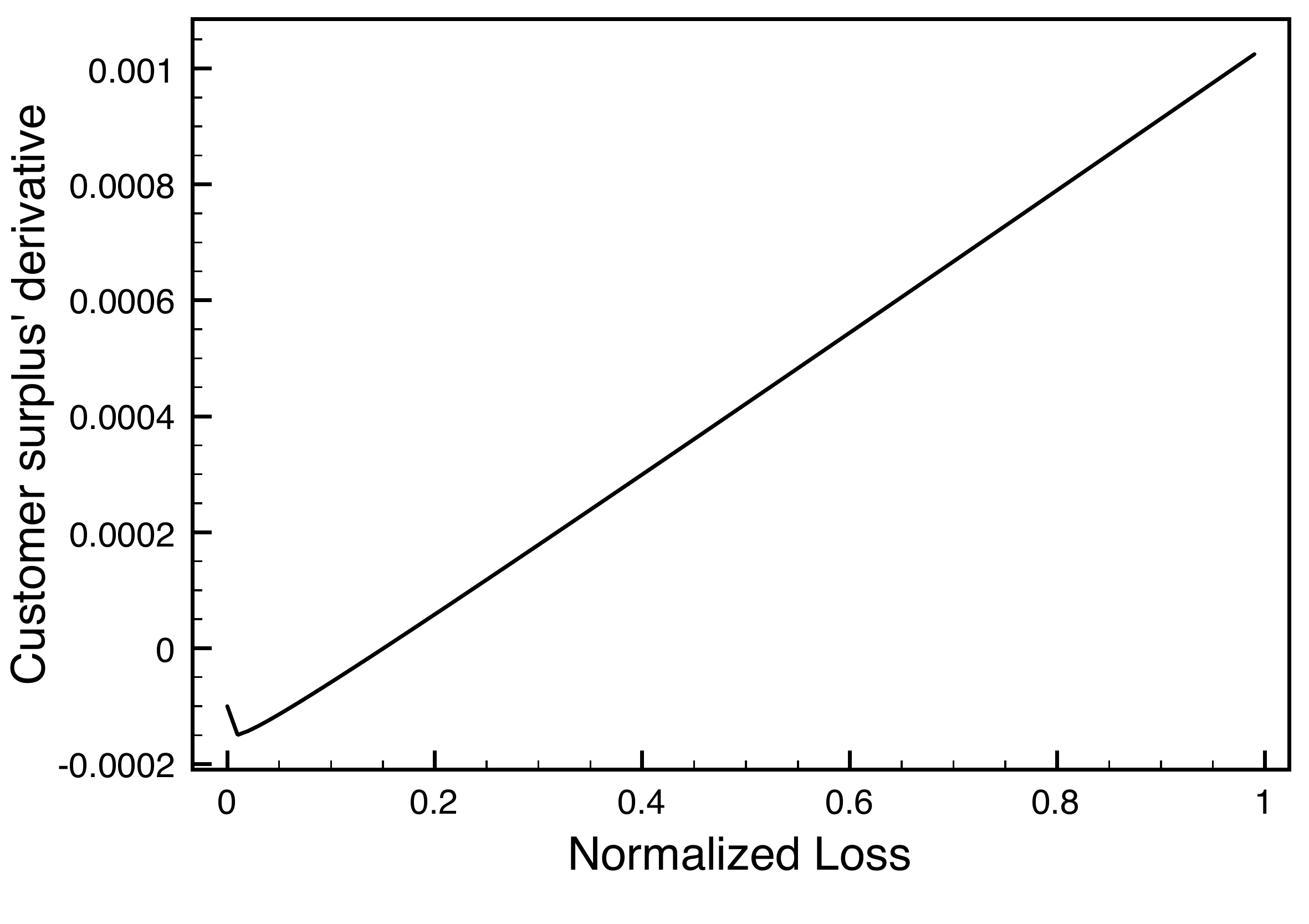}
	\caption{Trend of the first derivative in subcase (b)}
	\label{fig:caseb}
\end{center}
\end{figure}
This warrants a single legal solution. For it to be a single legal solution as well, we must evaluate the first derivative $\partial S_{\textrm{c}}/\partial l$  at $l=l_{N}$:  we have a single legal solution iff it is positive. This turns out to be the same condition as (\ref{condsuff}), which is then a sufficient condition for a single feasible solution both in subcases (a) and (b), i.e., when $\nu >1$.

Finally, when $nu = 1$, the decision equation (\ref{Eqopt}) reduces to
\begin{equation}
A-\pi_{\textrm{s}}-Bl^{\theta}=0 \rightarrow l^{\theta}=\frac{A-\pi_{\textrm{s}}}{B},
\end{equation}
which leads to the unique legal solution
\begin{equation}
l=\left[ \frac{A-\pi_{\textrm{s}}}{B} \right]^{1/\theta},
\end{equation}
if 
\begin{equation}
\label{caseqa}
A>\pi_{\textrm{s}}.
\end{equation}
This solution is also feasible if $l<l_{N}$, i.e.
\begin{equation}
\label{caseqb}
\frac{A-\pi_{\textrm{s}}}{B}<l_{N}^{\theta}.
\end{equation}
By introducing the definitions (\ref{defAB}) in the two previous conditions, we find that we have a unique legal solution of the decision equation (\ref{Eqopt}) iff
\begin{equation}
\frac{q^{*}p^{*}}{2}\left( 1-\frac{p}{p^{*}}\right)^{2}\frac{\alpha_{N}}{\pi_{\textrm{s}}+(1-\pi_{\textrm{s}})\pi_{\textrm{c}}^{*}(1+\theta)}<l_{N}<\frac{q^{*}p^{*}}{2}\left( 1-\frac{p}{p^{*}}\right)^{2}\frac{\alpha_{N}}{\pi_{\textrm{s}}}.
\end{equation}
The useful range obtained for the maximum loss $l_{N}$ is always nonzero since $(1-\pi_{\textrm{s}})\pi_{\textrm{c}}^{*}(1+\theta)>0$ 
\end{proof}

\section{Sensitivity of customer's feasible solution}
\label{Sens}
In Section \ref{Optim} we have shown that the trade-off problem has always a unique solution. The solution is represented by the maximum loss that the customer is willing to suffer to maximize its surplus. It is interesting to show the shape of the solution in a typical scenario such as that illustrated in Table \ref{table:T1}. In this section we examine the relationship between the optimal potential loss and the parameters appearing in the decision equation. We first consider the price imposed by the service provider as the single driving factor, and then analyze the sensitivity to all the other driving factors. Throughout this section, we consider the case $\nu < 1$.

We have stated in Section \ref{Optim} that the customer acts as a price taker. The price is then a relevant leverage that the service provider may use to drive the behaviour of the customer and eventually raise its tolerance towards taking more potential risks (i.e., increasing the maximum loss it tolerates). In order to examine the effect of price on the risk-taking attitude of the customer, we plot in Figure~\ref{fig:D1} the optimal potential loss for the customer when the other quantities take the values in Table \ref{table:T1}
\begin{table}
\begin{tabular}{lc}
Parameter & Value\\
\hline
Maximum quantity of service $q^{*}$ & 250\\
Willingness-to-pay $p^{*}$ & 1\\
Privacy parameter $\nu$ &  0.138647\\
security parameter $\theta$ & 0.138647\\
Maximum loss $l_{N}$ & 5000,10000\\
Marginal demand factor $\alpha_{N}$ & 20\%\\
System Data Breach Probability $\pi_{\textrm{s}}$ & $10^{-5}$\\
Maximum Customer Data Breach Probability $\pi_{\textrm{c}}^{*}$ & $10^{-4}$\\
\end{tabular}
\label{table:T1}
\caption{Values of parameters for the case study}
\end{table}

\begin{figure}[tp]
\begin{center}	
  \includegraphics[width=.7\columnwidth]{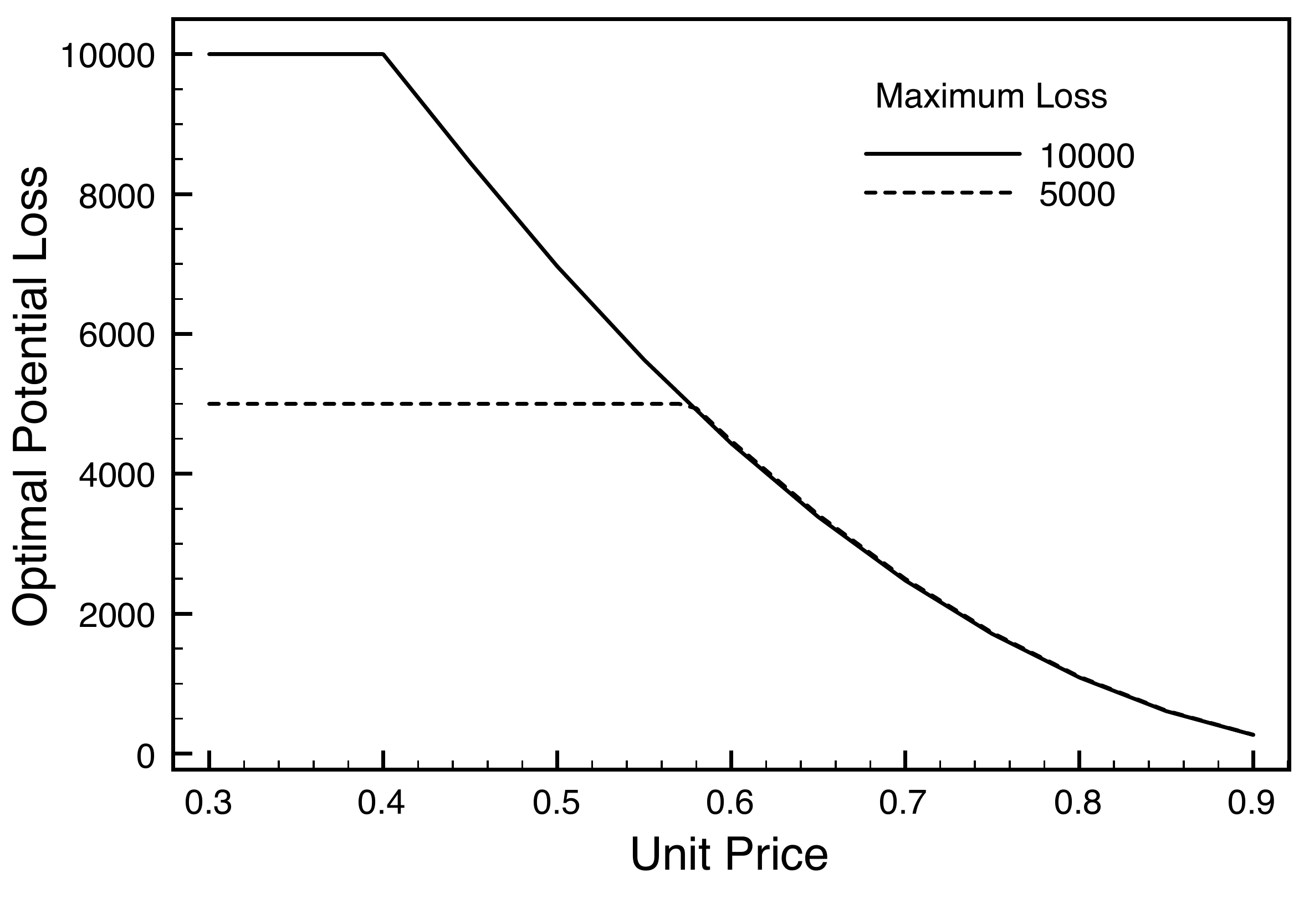}
	\caption{Impact of the unit price on the optimal amount of information released}
	\label{fig:D1}
\end{center}
\end{figure} 
Aside from the initial flat portion (due to the limitation on the maximum loss $l_{N}$) the optimal potential loss $l^{*}$ is a decreasing function of the price. We see that the maximum loss has no effect on the curve, excepting the cap that determines the initial flat portion. Hence, we can conclude that increasing the price makes the customer more careful in taking security-related risks. If the service provider wishes to relax the customer's attitude towards taking risks (in order to gather more personal information), it should then lower the price, though it is not necessary to go below the threshold corresponding to the saturation value (i.e., the maximum loss, equal to either 5000 or 10000 in Figure~\ref{fig:D1}).

On the other hand, the service provider could use the price for reasons other than driving the customer's behaviour towards taking risks, for example to maximize its profit. Though it can derive profits from exploiting the customer's personal information, we can even consider just the profits deriving from the customer's payment to gain some insight. Actually, for the same data  of Table \ref{table:T1} (with $l_{N}=10000$), we can see in Figure~\ref{fig:D2} that there is an optimal price maximizing the revenues. This optimal price is roughly 0.5 in this case, slightly larger than the maximum price needed to get all the personal information from the customer (which is roughly 0.4 in this case). Hence, by setting the price in this range, the service provider may strike a balance (and achieve the overall maximum profit) between making a profit by exploiting the customer's personal information or by extracting revenues from the customer's payments.
\begin{figure}[tp]
\begin{center}	
  \includegraphics[width=.7\columnwidth]{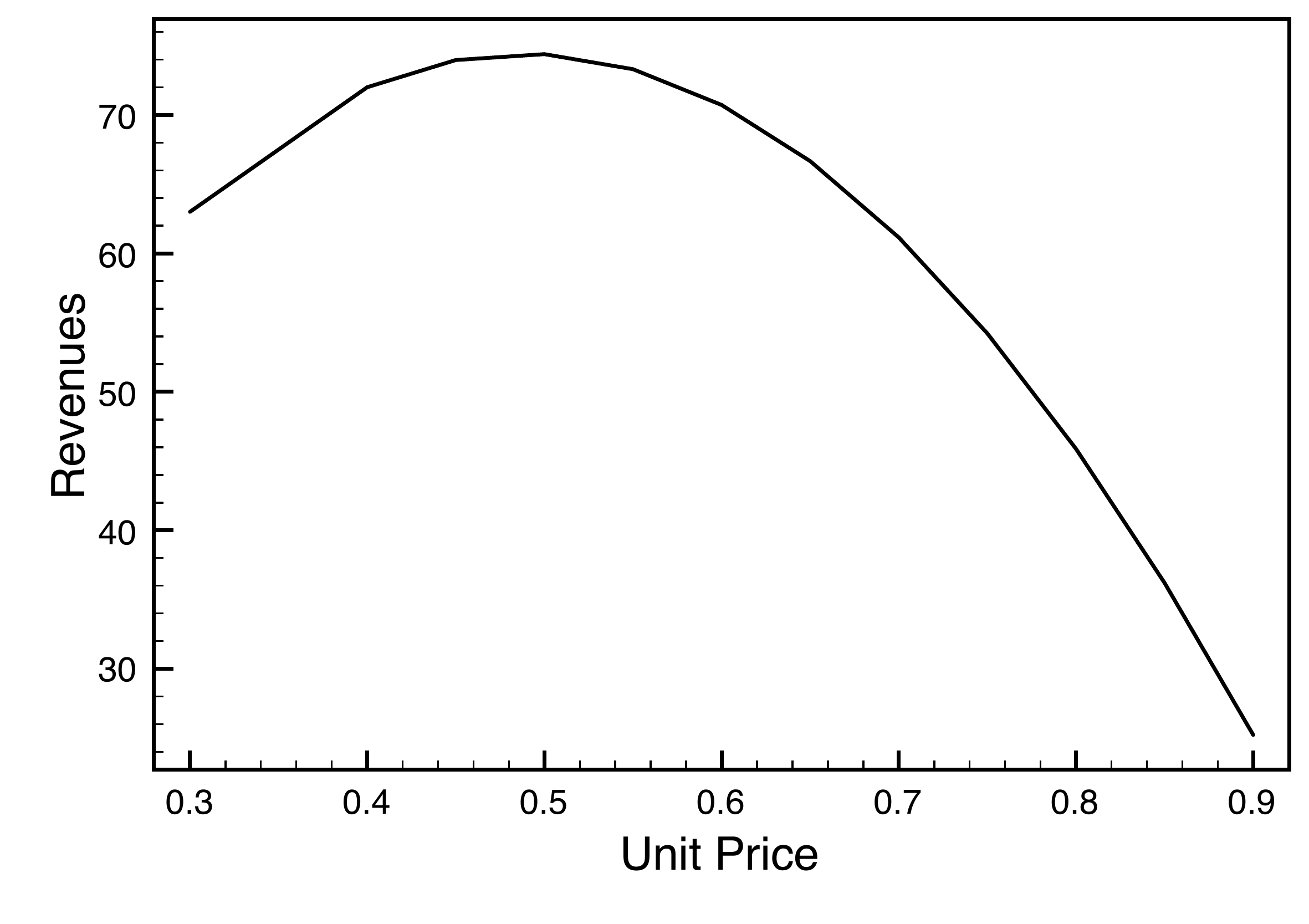}
	\caption{Impact of the unit price on the revenues collected by the service provider}
	\label{fig:D2}
\end{center}
\end{figure} 

In order to analyse the impact that each factor intervening in the decision equation has on the trade-off solution, we impose a change to each of the driving factors separately (only one at a time) and observe the resulting change in the optimal potential loss. Among the driving factors undergoing the same change that resulting in the largest change in the optimal potential loss can be considered as the most relevant. We take the scenario described in Table \ref{table:T2} as a reference. The optimal potential loss in that case is $l^{*}=3797$.

\begin{table}
\begin{tabular}{lc}
Parameter & Value\\
\hline
Maximum quantity of service $q^{*}$ & 250\\
Willingness-to-pay $p^{*}$ & 1\\
Price $p$ & 0.5\\
Privacy parameter $\nu$ &  0.138647\\
security parameter $\theta$ & 0.138647\\
Maximum loss $l_{N}$ & 10000\\
Marginal demand factor $\alpha_{N}$ & 20\%\\
System Data Breach Probability $\pi_{\textrm{s}}$ & $10^{-4}$\\
Maximum Customer Data Breach Probability $\pi_{\textrm{c}}^{*}$ & $10^{-4}$\\
\end{tabular}
\label{table:T2}
\caption{Reference values for the sensitivity analysis}
\end{table}
In order to perform a sensitivity analysis we divide the parameters into two groups: the first group is composed of the parameters possessing a unit of measure, namely the maximum quantity of service $q^{*}$, the willingness-to-pay $p^{*}$, the price $p$, and the maximum loss $l_{N}$;  the second group is composed of the dimensionless variables $\nu$, $\theta$, $\pi_{\textrm{s}}$, and $\pi_{\textrm{c}}^{*}$. For the two groups we adopt two different sensitivity measures, namely the elasticity for the first group and the quasi-elasticity for the second group.

We start with the parameters possessing a unit of measure. For them we define a discrete version of the elasticity measure. The elasticity is the ratio of the percent change in the output variable to the percent change in the driving factor, when small changes are imposed on the driving factor. In our case the output variable is the optimal potential loss $l^{*}$, but we consider significant, rather than small, changes in the driving factor. With reference to the generic  driving factor $x$ (which may be either $q^{*}$, $p^{*}$, $p$ or $l_{N}$), the discrete elasticity is
\begin{equation}
\label{Elast}
\varepsilon_{x}=\frac{\Delta l^{*}/l^{*}}{\Delta x/x},
\end{equation}
where $\Delta l^{*}$ and $\Delta x$ are respectively the discrete changes of the optimal potential loss and the driving factor. Here we impose $\Delta x /x= \pm 10\%$ in turn for each of the driving factors, observe the resulting changes in the optimal potential loss, and compute the elasticity through Definition (\ref{Elast}). The driving factors exhibiting larger values of elasticity have a greater influence on the optimal potential loss and hence on the customers' attitude towards divulging personal data.. In order to compare the three driving factor at hand, we draw a Tornado chart. The Tornado chart is a special type of bar chart, well used to perform sensitivity analyses \cite{Eschenbach}; in our case the bars represent the values of the elasticity, and we arrange the driving factors vertically, ordered so that the largest bar appears at the top of the chart, the second largest appears second from the top, and so on.  The Tornado chart for $q^{*}$, $p^{*}$, and $l_{N}$ is shown in Figure~\ref{fig:T1}, where the darker bars are employed to show the elasticity for positive increments of the driving factor, and the lighter bars for negative ones. We see that there are very large differences in the impact of each quantity. Both the elasticities $\varepsilon_{q^{*}}$ and $\varepsilon_{p^{*}}$ are positive; hence, when either the willingness-to-pay or the maximum quantity of service increase/decrease, the optimal potential loss increases/decreases as well. Since $\varepsilon_{q^{*}}\simeq 1$, changes in the maximum quantity of service translate to roughly the same percentage change in the optimal potential loss. Instead, the effect of changes in the willingness-to-pay is nearly trebled, and the effect of price changes is nearly doubled. On the other end of the scale, the maximum  loss has a negligible impact on the optimal potential loss (as long as we do not enter the saturation region, i.e., the flat portion of Figure~\ref{fig:D1}, where $l^{*}=l_{N}$, so that, as long as we keep staying on the flat portion, each change of the maximum loss reflects in an equal increase of the optimal potential loss, or, in other terms, the elasticity is unitary). Both the elasticity with respect to the price and to the maximum loss are negative: if we increase either the maximum loss or the price the optimal potential loss decreases.
\begin{figure}[tp]
\begin{center}	
  \includegraphics[width=.7\columnwidth]{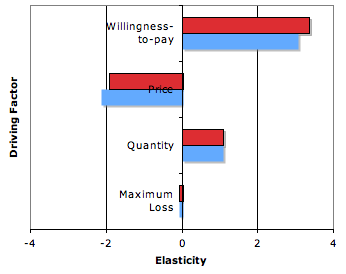}
	\caption{Elasticity for the maximum quantity of service, the willingness-to-pay, and the maximum loss}
	\label{fig:T1}
\end{center}
\end{figure}
We can then see that the maximum loss is, and by far, the least relevant factor in the customer's decision, while a greater role is played by the price and by the quantity of service. We can observe that a decrease in the price or an increase in the maximum quantity of service are benefits that the customer appreciates readily. Since the maximum loss is associated to the worst case, and its evaluation is often influenced by attention-grabbing catastrophic statements or press releases, we can then expect the effect of such statements to be quite small on the actual customer's behaviour. This conclusions aligns with the observed small impact of data breach notifications on identity thefts reported in \cite{Romanosky}. The major effects are then provided by immediate benefits (price and maximum quantity of services) rather than by prospective effects (the maximum loss that may occur in the future). In fact, it has been observed that customers tend to underestimate the present value of a future loss, e.g., by applying a hyperbolic discount rate and underinsure themselves against future risks \cite{Acquisti05}; such behaviour is then well predicted by our model. 

In addition to the degree of influence they exert on the customer's decision, the four quantities so far considered differ also as to the capability of the service provider to act on them. In fact, the willingness-to-pay and the maximum quantity of service depend mainly on macro-economic and social factors (e.g., inflation and salaries, service penetration level, social use of the service), on which the service provider has little or no influence. On the other hand, the service provider has a nearly total control on the price (excluding regulatory intervention and the market pressure). The price is therefore an easy leverage in the hands of the service provider. The service provider may exert a relevant influence on the maximum potential loss as well, for example by determining the level of anonymization applied to the personal data so to mitigate the losses in the case of a data breach.
 
For the dimensionless driving factors (i.e., the security parameter $\theta$, the privacy parameter $\nu$, and the data breach probabilities $\pi_{\textrm{s}}$ and $\pi_{\textrm{c}}^{*}$) we cannot use the elasticity. In fact, the popular interpretation of elasticity is that it represents the percentage change in the output variable upon a 1\% change in the driving factor, and the major reason why the elasticity is preferred to the derivative is that it is invariant to the arbitrary units of measurements of both variables. In the present case, however, the driving factor is either a probability or a bounded parameter, and its scale is not arbitrary since it ranges from 0 to 1. It is then usual to resort to the quasi-elasticity, where instead of two percentage changes we form the ratio of a percentage value and an absolute quantity (see, e.g., Section 2.1 of \cite{Cramer}). In our case, the quantity expressed as a percentage is the optimal potential loss. As for the elasticity, we recall that a positive sign of the quasi-elasticity means that the optimal potential loss changes in the same direction as the driving factor. For each dimensionless driving factor $x$, we define the following discrete version of the quasi-elasticity:
\begin{equation}
\label{Qelast}
\hat{\varepsilon}_{x}=\frac{\Delta l^{*}/l^{*}}{\Delta x}.
\end{equation} 
We deal first with the quasi-elasticity with respect to the security and privacy parameters. For both we examine what happens to the optimal potential loss when they take two values at either side of the reference value 0.138647, namely the values 0.1 and 0.2. We plot the resulting quasi-elasticities in Figure~\ref{fig:T2}, where we note that the privacy parameter exerts the larger influence by far. We add that the quasi-elasticity $\hat{\varepsilon}_{\nu}$ is positive, while $\hat{\varepsilon}_{\theta}$ is negative. We can conclude that the customer appears to be more sensitive to the profiling activities conducted by the service provider (embodied by the privacy parameter) than to its own self-protection attitude (embodied by the security parameter). If the service provider is privacy-friendly, the customer is willing to share more data (since the optimal potential loss grows). A privacy-friendly attitude by the provider spurs a positive market reaction, and  the opt-in approach (the approach requiring the express consent by the customer to be profiled or to receive information, merchandise, or messages from a marketer) is not a detrimental factor for the market \cite{A29,FTC2010}.  
\begin{figure}[tp]
\begin{center}	
  \includegraphics[width=.7\columnwidth]{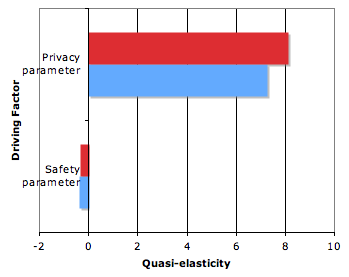}
	\caption{Quasi-Elasticity for the security and privacy parameters}
	\label{fig:T2}
\end{center}
\end{figure}

We plot again the Tornado chart for the quasi-elasticity with respect to the two data breach probabilities $\pi_{\textrm{s}}$ and $\pi_{\textrm{c}}^{*}$ in Figure~\ref{fig:T3}. For both probabilities the range considered is $(5\cdot 10^{-5},2\cdot 10^{-4})$, i.e., two octaves wide: the very large values (in the order of thousands) reported for the quasi-elasticity in this case are a consequence of the very small values of the data breach probabilities, which form the denominator in the ratio employed in the definition (\ref{Qelast}). In both cases the quasi-elasticity is negative: an increase in the data breach probability leads anyway the customer towards a more cautious behaviour. Here, the influence of the two driving factors is quite comparable. However, the influence of the customer's data breach probability is slightly larger.
\begin{figure}[tp]
\begin{center}	
  \includegraphics[width=.7\columnwidth]{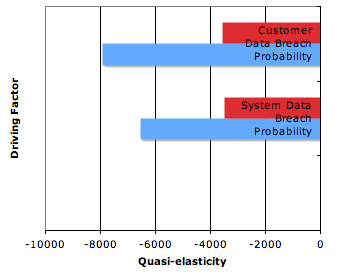}
	\caption{Quasi-Elasticity for the data breach probabilities}
	\label{fig:T3}
\end{center}
\end{figure}

\section{The case of a perfectly secure provider}
In Section \ref{Pdb} we have seen that both the customer and the provider have a role in determining the overall probability of a data breach. In the sample cases shown in Section \ref{Sens} we have considered the case where both parties are vulnerable. However, we may expect the provider's information system to be better protected than the individual customer against data thefts. It is then interesting to consider what happens when the data breach is entirely due to the customer's vulnerability. In this section we consider the limit case of the perfectly secure service provider and observe the effect on the customer's trade-off decision.

In order to examine the case of the perfectly secure provider, we go back to the decision equation (\ref{Solu}), and set the service provider's data breach probability $\pi_{\textrm{s}}=0$. We get the new decision equation
\begin{equation}
\label{Deq2}
\frac{q^{*}p^{*}\nu}{2}\frac{\alpha_{N}}{l_{N}}\left( 1-\frac{p}{p^{*}}\right)^{2}\left( \frac{l}{l_{N}}\right)^{\nu -1}+\pi_{\textrm{c}}^{*}(\theta+1)\left( \frac{l}{l_{N}}\right)^{\theta}=0,
\end{equation}
which can be solved for the optimal potential loss
\begin{equation}
\label{Solu2}
l^{*}=\left[\frac{q^{*}p^{*}\nu}{2}\frac{\alpha_{N}}{\pi_{\textrm{c}}^{*}(\theta+1)}l_{N}^{\theta - \nu}\left( 1-\frac{p}{p^{*}}\right)^{2}\right]^{1/(\theta - \nu +1)}.
\end{equation}
 
We expect the risky attitude of the customer to grow as an effect of the security assured by the provider. Actually, we can measure such effect by forming the ratio of the optimal loss values obtained when $\pi_{\textrm{s}}=0$ and $\pi_{\textrm{s}}\neq 0$, all the other parameters being equal. We may then define the Optimal Loss Ratio as follows
\begin{equation}
OLR=\frac{l^{*}\vert_{\pi_{\textrm{s}}=0}}{l^{*}\vert_{\pi_{\textrm{s}}\neq 0}}.
\end{equation}
If $OLR > 1$, then the customer is willing to accept a larger potential loss as a consequence of the perfect security offered by the service provider. We consider the sample case defined in Table \ref{table:T2} (but with $l_{N}=10000$) to see how the risk-taking attitude of the customer changes in the perfect provider scenario. We plot the $OLR$ ratio in Figure~\ref{fig:Olr}. We see that the customer is actually ready to increase its potential loss, even doubling it, and that the optimal loss ratio increases with the service price. The discontinuity exhibited by that ratio in the picture, roughly around $p=0.4$, is due to the optimal potential loss reaching its maximum value in the completely secure provider scenario while it is still growing in the unsecure provider case. 
\begin{figure}[tp]
\begin{center}	
  \includegraphics[width=.7\columnwidth]{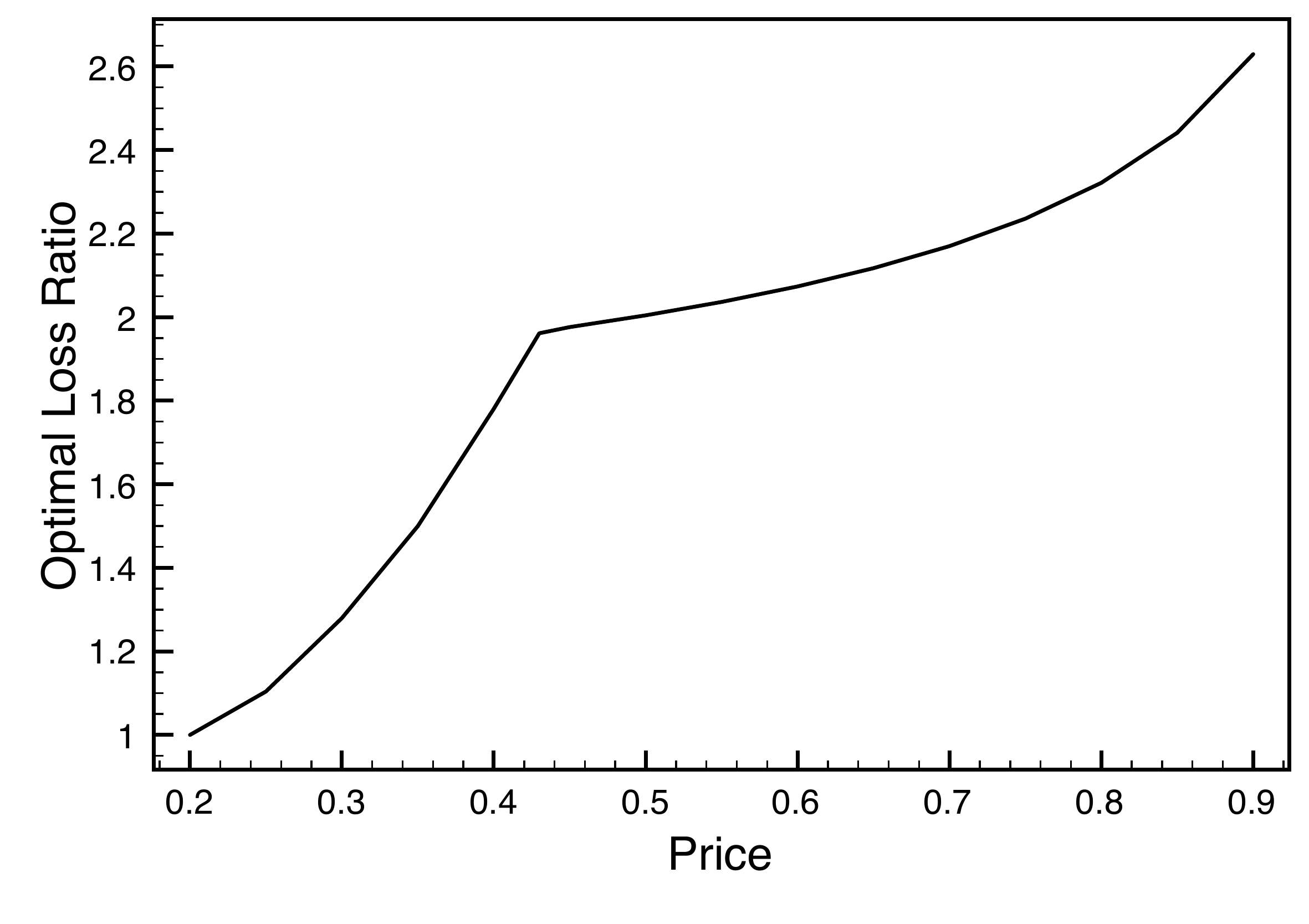}
	\caption{Increase in the optimal potential loss under complete provider's security}
	\label{fig:Olr}
\end{center}
\end{figure}

In addition to observing how the optimal potential loss moves when the service provider increases the price, we can examine the sensitivity of the trade-off solution to each of the variables appearing in the decision equation (\ref{Deq2}), i.e., the driving factors. In the case of the unsecure provider, where we had not a closed-form solution of the decision equation, we resorted to the numerically-evaluated discrete elasticity, defined for significant changes of the driving factor. Here, since we have the solution (\ref{Solu2}), we can use instead the elasticity
\begin{equation}
\varepsilon_{x}=\frac{x}{l^{*}}\frac{\partial l^{*}}{\partial x},
\end{equation}
where $x$ is again any of the driving factors. By deriving the expr. (\ref{Solu2}) with respect to each of the driving factors and replacing the trade-off solution and its first derivative in the definition of elasticity, we find the following expressions for the elasticity:
\begin{equation}
\begin{split}
\varepsilon_{q^{*}}&=\frac{1}{\theta - \nu +1},\\
\varepsilon_{p^{*}}&=\frac{1}{\theta - \nu +1}\frac{1+p/p^{*}}{1-p/p^{*}},\\
\varepsilon_{l_{N}}&=\frac{\theta - \nu}{\theta - \nu +1},\\
\varepsilon_{p}&=-\frac{1}{\theta - \nu +1}\frac{p/p^{*}}{1-p/p^{*}}.
\end{split}
\end{equation}
We note first that all the elasticities depend on the difference $\theta - \nu$, i.e., between the security parameter and the privacy parameter. Namely, the elasticities with respect to the maximum quantity and the maximum loss depend on that difference only. That difference also appears in the solution (\ref{Solu2}). The presence of the difference $\theta - \nu$ means that there is a sort of counterbalance between the two parameters: if the customer has set its optimal amount of shared personal data (i.e., its proxy optimal loss $l^{*}$) and the service provider gets looser in the treatment of personal data (the privacy parameter $\nu$ grows), the customer may take a counterbalancing move by increasing its self-protecting attitude, i.e., by increasing its security parameter $\theta$ so to keep the difference $\theta - \nu$ constant. Instead, the elasticities with respect to the price imposed by the service provider and the willingness-to-pay of the customer also depend on the $p/p^{*}$ ratio. When $-1\le \theta - \nu \le 1$ and $0\le p/p^{*} \le 1$, we can plot the elasticity as a function of $\theta - \nu$, with the  $p/p^{*}$ ratio as a parameter. We obtain the curves shown in Figure \ref{fig:elast} (where the $p/p^{*}$ ratio is indicated as PW). The elasticity with respect to the willingness-to-pay is always positive: easy spenders take larger risks with respect to low-spenders. On the other hand, the elasticity with respect to the price is always negative: an increase in price spurs a more restrained attitude towards divulging personal data. This observation confirms the conclusions reached in \cite{Taylor2004}, where it was shown that customers aware of the use of their personal data are less willing to be profiled (profiling takes place through recording the purchasing history of the customer) and refuse to buy at high prices. A similar attitude towards being profiled is also present in \cite{NaldiP10}, where the customer may refuse an offer by a service provider with the aim of getting a lower price. We see that in all cases, as the difference $\theta - \nu$ grows (we tend towards a combination of privacy-friendly provider and privacy-aware customer), the absolute value of the elasticity decays towards a limit value. In fact, when $\theta - \nu = 1$, we have $\varepsilon_{q^{*}}=\varepsilon_{l_{N}}=1/2<1$ (quite an anelastic response), so that the customer is relatively unwilling to increase its exposure to risk because of changes in the service offer by the service provider or in the maximum loss. In all cases we note that the ranking among the different elasticities observed for the case of unsecure provider in Section \ref{Sens} is preserved.

Similarly, for the parameters that do not possess a unit of measure, we use the quasi-elasticity
\begin{equation}
\hat{\varepsilon}_{x}=\frac{\partial l^{*} /l^{*}}{\partial x}.
\end{equation}
For the privacy and the security parameters and the customer's data breach probability we have respectively
\begin{equation}
\begin{split}
\hat{\varepsilon}_{\nu}&=\frac{1}{\theta - \nu +1}\left[ \frac{1}{\nu}+\ln \left(\frac{l^{*}}{l_{N}}\right)\right],\\
\hat{\varepsilon}_{\theta}&=-\frac{1}{\theta - \nu +1}\left[ \ln \left(\frac{l^{*}}{l_{N}}\right)-\frac{1}{\theta + 1}\right],\\
\hat{\varepsilon}_{\pi_{\textrm{c}}^{*}}&=-\frac{1}{\theta - \nu +1}\frac{1}{\pi_{\textrm{c}}^{*}}.
\end{split}
\end{equation}
We note that none of this set of quasi-elasticities is bounded. The quasi-elasticities with respect to the security and the privacy parameters may take both positive and negative values, while $\hat{\varepsilon}_{\pi_{\textrm{c}}^{*}}$ is always negative. Namely, we have $\hat{\varepsilon}_{\nu}>0$ iff $l^{*}/l_{N}>\exp (-1/\nu)$. If the privacy parameter is not very close to 1, that may happen nearly for the whole range of prices: for the reference scenario of Table \ref{table:T2}, the quasi-elasticity is positive if the optimal potential loss is larger than 7.13 (with $l_{N}=10000$), a condition met as long as $p<0.984 p^{*}$. Instead, as to the security parameter, we have $\hat{\varepsilon}_{\theta}>0$ iff $l^{*}/l_{N}<\exp (-1/(1+\theta))$. Again for the scenario of Table \ref{table:T2}, the quasi-elasticity is positive as long as $p>0.6305 p^{*}$. 

\begin{figure}[tp]
\begin{center}	
  \includegraphics[width=.7\columnwidth]{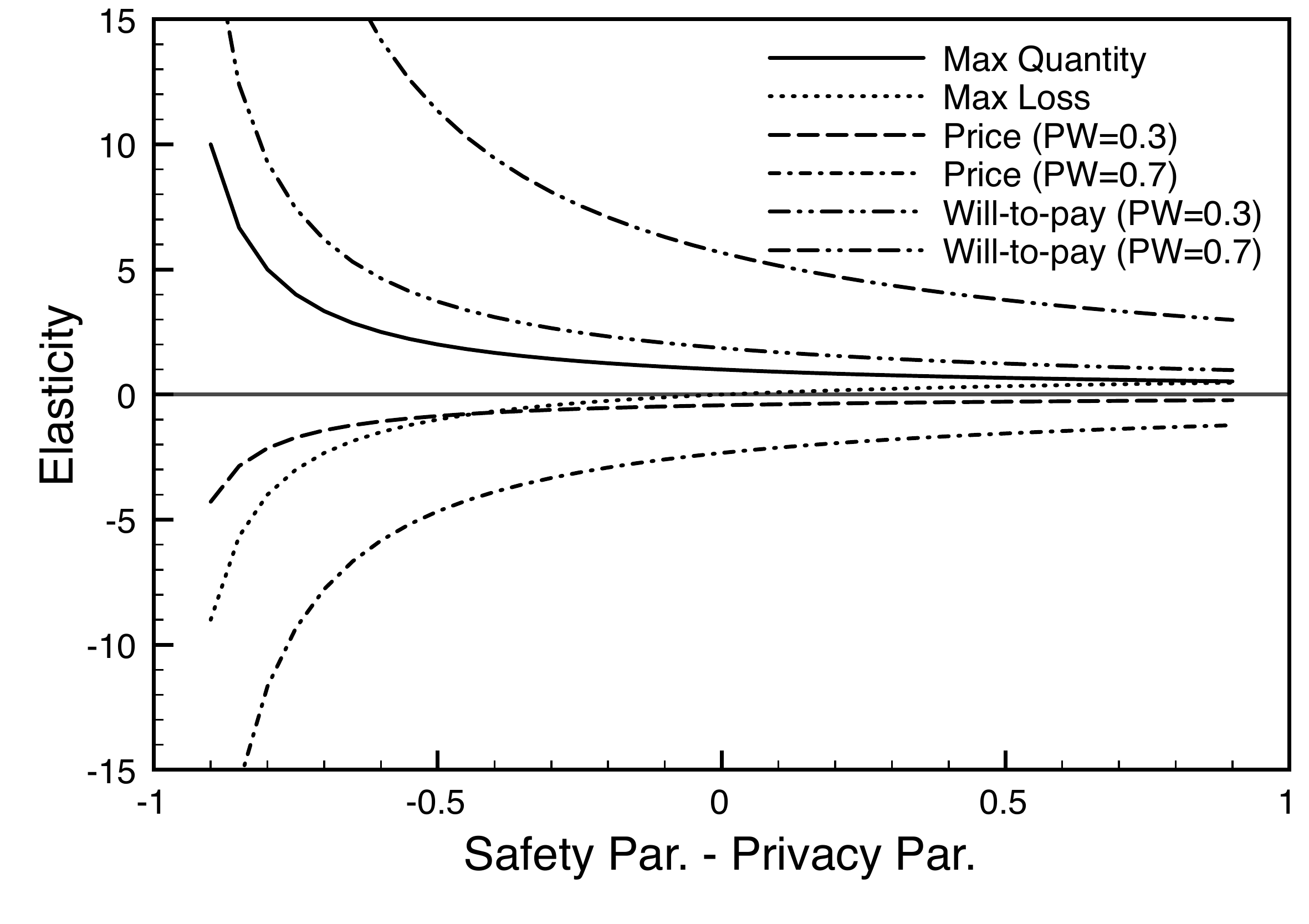}
	\caption{Increase in the optimal potential loss under complete provider's security}
	\label{fig:elast}
\end{center}
\end{figure}

\section{Conclusions}
We have formulated the decision to be taken by the customer on the amount of personal information to release as a trade-off problem, where the customer aims at maximizing its surplus represented by the algebraic sum of benefits and disadvantages. The benefit for the customer is the reduction of price (or, equivalently, the increase in the quantity of service obtained) achieved in return for the release of personal information. But the customer may suffer an economical loss due to the fraudulent use of its personal data. We have proved that the solution to the trade-off problem exists and is unique, providing the optimal potential loss associated to the amount of personal information the customer finds convenient to release. We have also performed a sensitivity analysis to identify the most influential driving factors, i.e., those parameters that have the largest influence on the trade-off solution. The major effects are provided by immediate benefits (price and maximum quantity of services) rather than by prospective negative effects (the maximum loss that may occur in the future). A major role is also played by the so-called privacy parameter, by which the service provider regulates the benefits released to the customers. For the special case of a perfectly secure provider (a data breach may occur just on the customer's side) we have provided a closed-form solution of the decision equaiton, and analytical expressions for the elasticity (or quasi-elasticity) with respect to all the variables involved. The results obtained for the perfectly secure provider are quite close to those obtained for the general case. The price and the willingness-to-pay play anyway a major role in the customer's decision. Easy spenders take larger risks with respect to low-spenders, but in general an increase in price spurs a more restrained attitude towards divulging personal data. The results may be used by the customer to determine its behaviour in response to the information disclosure requests coming from its service provider.


\end{document}